\newtheorem{theorem}{Theorem}[section]
\newtheorem{definition}{Definition}[section]
\newtheorem{example}{Example}[section]
\newtheorem{remark}{Remark}[section]
\newtheorem{proposition}{Proposition}[section]
\begin{document}

\begin{frontmatter}

\title{Strong current-state and initial-state opacity of \\ discrete-event systems \thanksref{footnoteinfo}} 

\thanks[footnoteinfo]{This work is supported in part by the National Natural Science Foundation of China under Grant 61903274, the Tianjin Natural Science Foundation of China under Grant 18JCQNJC74000, and the Alexander von Humboldt Foundation.}

\author[Han]{Xiaoguang Han}\ead{hxg-allen@163.com},    
\author[Zhang]{Kuize Zhang}\ead{kuize.zhang@campus.tu-berlin.de},               
\author[Han]{Jiahui Zhang}\ead{zjhtust@163.com},
\author[Li]{Zhiwu Li}\ead{zhwli@xidian.edu.cn},  
\author[Chen]{Zengqiang Chen}\ead{chenzq@nankai.edu.cn}

\address[Han]{College of Electronic Information and Automation, Tianjin University of Science and Technology, Tianjin 300222, China}
\address[Zhang]{Control Systems Group, Technical University of Berlin, Berlin 10587, Germany}
\address[Li]{Institute of Systems Engineering, Macau University of Science and Technology, Taipa 519020, China, and School of Electro-Mechanical Engineering, Xidian University, Xi'an 710071, China}
\address[Chen]{College of Artificial Intelligence, Nankai University, Tianjin 300350, China}

\begin{abstract}                          
Opacity, as an important property in information-flow security, characterizes the ability of a system to keep some secret information from an intruder.
In discrete-event systems, based on a standard setting in which an intruder has the complete knowledge of the system's structure, the standard versions of current-state opacity and initial-state opacity cannot perfectly characterize high-level privacy requirements.
To overcome such a limitation, in this paper we propose two stronger versions of opacity in partially-observed discrete-event systems, called \emph{strong current-state opacity} and \emph{strong initial-state opacity}.
Strong current-state opacity describes that an intruder never makes for sure whether a system is in a secret state at the current time, that is, if a system satisfies this property, then for each run of the system ended by a secret state, there exists a non-secret run whose observation is the same as that of the previous run.
Strong initial-state opacity captures that the visit of a secret state at the initial time cannot be inferred by an intruder at any instant.
Specifically, a system is said to be strongly initial-state opaque if for each run starting from a secret state, there exists a non-secret run of the system that has the same observation as the previous run has.
To verify these two properties, we propose two information structures using a novel concurrent-composition technique, which has exponential-time complexity $O(|X|^4|\Sigma_o||\Sigma_{uo}||\Sigma|2^{|X|})$,
where $|X|$ (resp., $|\Sigma|$, $|\Sigma_o|$, $|\Sigma_{uo}|$) is the number of states (resp., events, observable events, unobservable events) of a system.
\end{abstract}

\begin{keyword}                           
Discrete-event system, Strong current-state opacity, Strong initial-state opacity, Observer, Concurrent composition 
\end{keyword}                             

\end{frontmatter}

\section{Introduction}\label{sec1}

Opacity is a confidentiality property, which characterizes the scenario whether some ``secret" of a system is ambiguous to a potential malicious intruder based on the evolution of the system.
In other words, an opaque system always holds the plausible deniability for its ``secret" during its execution.
Opacity can be adapted to capture a variety of privacy and security requirements in cyber-physical systems, including event-driven dynamic systems~\cite{Lafortune(2018),Hadjicostis(2020)} and time-driven dynamic systems~\cite{An(2020),Ramasubramanian(2020)}.

The notion of opacity initially appeared in the computer science literature~\cite{Mazare(2004)} to analyze cryptographic protocols.
Whereafter, various versions of opacity were characterized in the context of discrete-event systems (DESs).
For instance, in automata-based models, several notions of opacity have been proposed and studied, which include current-state opacity~\cite{Saboori(2007)}, initial-state opacity~\cite{Saboori(2013)}, $K$-step opacity~\cite{Saboori(2011a)}, infinite-step opacity~\cite{Saboori(2012)}, and language-based opacity~\cite{Lin(2011),Wu(2013)}.
Some more efficient algorithms to check them have also been provided in~\cite{Wu(2013),Yin(2017),Lan(2020)}.
Furthermore, when a given system is not opaque, its opacity enforcement problem has been extensively investigated using a variety of techniques, including supervisory control~\cite{Dubreil(2010),Saboori(2012b),Tong(2018)}, insertion or edit functions~\cite{Ji(2018),Ji(2019a),Ji(2019b),Yin(2020)}, dynamic observers~\cite{Zhang(2015)}, etc.
Recently, verification and/or enforcement of opacity have been extended to other classes of settings, including Petri nets~\cite{Bryans(2008),Tong(2017),Cong(2018)}, stochastic systems~\cite{Saboori(2014),Keroglou(2017),Yin(2019a)}, modular systems~\cite{Tong(2019)}, networked systems~\cite{Yin(2019b),Yang(2019)}, etc.
In recent literature~\cite{Zhang(2019),Yin(2021)}, the authors studied opacity preserving (bi)simulation relations using an abstraction-based technique for nondeterministic transition systems and metric systems, respectively.
Some applications of opacity in real-world systems have also been provided in the literature, see, e.g.,~\cite{Saboori(2011b),Wu(2014),Bourouis(2017),Lin(2020)}.

In automata-based models, different notions of opacity can capture different security requirements.
For example, current-state opacity (CSO) characterizes that an intruder cannot determine for sure whether a system is currently in a secret state.
In Location-Based Services (LBS), a user may want to hide his/her initial location or his/her location (e.g., visiting a hospital or bank) at some specific previous instant.
Such requirements can be characterized by initial-state opacity (ISO) and $K$/infinite-step opacity ($K$/Inf-SO)\footnote{For convenience, the notion originally named CSO (resp., ISO, $K$-SO, and Inf-SO) is categorized as standard CSO (resp., standard ISO, standard $K$-SO, and standard Inf-SO) in this paper.}.
However, these four standard versions of opacity have some limitations in practice.
Specifically, they cannot capture the situation that an intruder can never infer for sure whether the system has passed through a secret state based on his/her observation.
In other words, even though a system is ``opaque" in the standard sense, the intruder may necessarily determine that a secret state must has been passed through.
To this end, in~\cite{Falcone(2015)}, a strong version of opacity called \emph{strong $K$-step opacity} is proposed to capture that the visit of a secret state cannot be inferred within $K$ steps.
Also, an algorithm that has complexity $O(2^{|X|+|X|^2})$ is provided to check it using so-called $K$-delay trajectory estimators and R-verifiers, where $|X|$ is the number of states of a system.

Inspired by~\cite{Falcone(2015)}, in this paper we are interested in defining and verifying two other strong versions of opacity in partially-observed nondeterministic finite-state automata (NFAs),
called \emph{strong current-state opacity} (SCSO) and \emph{strong initial-state opacity} (SISO), respectively.
They mean that if a run passes through a secret state at the current time (resp., at the initial time), there exists another run that never passes through a secret state and has the same observation as the previous run has.
Obviously, they have higher-level confidentiality than the standard versions.
Note that, by the definition of strong $K$-step opacity proposed in~\cite{Falcone(2015)}, we know readily that strong $K$-step opacity reduces to standard CSO when $K=0$.
Thus, we conclude that SCSO implies the strong $0$-step opacity but actually these two strong versions of opacity are incomparable when $K\geq 1$.
Therefore, the proposed notions of SCSO and SISO are clearly different from all the existing ones in the literature.
Further, we develop a novel method to verify SCSO and SISO from a span-new perspective.
Specifically, the main contributions of this paper are summarized below.

\begin{itemize}
\item To overcome the inadequacy that the standard CSO and ISO cannot perfectly characterize high-level privacy requirements, we propose two stronger versions of opacity, called SCSO and SISO.
      They can characterize higher-level confidentiality than the standard versions.
\item For SCSO, we construct an information structure using a novel concurrent-composition technique.
      And then, we propose a verification criterion of SCSO based on \emph{leaking secret states} defined by us in the proposed structure.
      This verification approach costs time $O(|X|^4|\Sigma_o||\Sigma_{uo}||\Sigma|2^{|X|})$, where $|X|$ (resp., $|\Sigma|, |\Sigma_o|, |\Sigma_{uo}|$) is the number of states (resp., events, observable events, unobservable events) of a system.
\item Regarding the verification problem of SISO, using the novel concurrent-composition technique, we propose another information structure in which the information that whether a system has passed through a secret initial state can be characterized.
    This approach has also time complexity $O(|X|^4|\Sigma_o||\Sigma_{uo}||\Sigma|2^{|X|})$.
\end{itemize}

The rest of this paper is arranged as follows.
Section~\ref{sec2} provides preliminaries needed in this paper, including the system model, the formal definitions of standard CSO and ISO, and their limitations on confidentiality.
Section~\ref{sec3} formalizes the notion of SCSO, and proposes a novel information structure for its verification.
In Section~\ref{sec4}, the notion of SISO is proposed, and its verification approach is reported.
In Section~\ref{sec5}, we conclude this paper with a brief discussion on how to use the proposed concurrent-composition structure to verify \emph{strong infinite-step opacity}, which can be seen as a strong version of the standard Inf-SO.

\section{Preliminaries}\label{sec2}

\subsection{System model}\label{subsec2.1}

A DES of interest is modeled as a \emph{nondeterministic finite-state automaton} (NFA) $G=(X,\Sigma,\delta, X_0)$, where $X$ is the finite set of states, $\Sigma$ is the finite set of events, $X_0\subseteq X$ is the set of initial states, $\delta: X\times\Sigma\rightarrow 2^{X}$ is the transition function, which depicts the system dynamics: given states $x,y\in X$ and an event $\sigma\in \Sigma$, $y\in\delta(x,\sigma)$ implies that there exists a transition labeled by $\sigma$ from $x$ to $y$.
Note that for $x\in X$ and $\sigma\in\Sigma$, if there exists no $y\in X$ such that $y\in\delta(x,\sigma)$, then $\delta(x,\sigma)=\emptyset$.
We can extend the transition function to $\delta :X\times\Sigma^{\ast}\rightarrow 2^{X}$ in the usual manner, where $\Sigma^{\ast}$ denotes the \emph{Kleene closure} of $\Sigma$, consisting of all finite sequences composed of the events in $\Sigma$ (including the empty sequence $\epsilon$), see, e.g., \cite{Cassandras(2008)} for more details on DESs.
We use $\mathcal{L}(G,x)$ to denote the language generated by $G$ from state $x$, i.e., $\mathcal{L}(G,x)=\{s\in\Sigma^{\ast}: \delta(x,s)\neq\emptyset\}$.
Therefore, the language generated by $G$ is $\mathcal{L}(G)=\cup_{x_0\in X_0}\mathcal{L}(G,x_0)$.
For a sequence $s\in\mathcal{L}(G)$, we denote its prefix closure by $Pr(s)$, i.e., $Pr(s)=\{w\in\mathcal{L}(G): (\exists w^\prime\in\Sigma^{\ast})[ww^\prime=s]\}$.
Further, for a prefix $w\in Pr(s)$, we use the notation $s/w$ to denote the suffix of $s$ after its prefix $w$.
We without loss of generality assume that $G$ is accessible, i.e., all its states are reachable from $X_0$.

As usual, we assume that the intruder can only see partially the system's behavior.
To this end, $\Sigma$ is partitioned into the set $\Sigma_o$ of observable events and the set $\Sigma_{uo}$ of unobservable events, i.e., $\Sigma_o\cup\Sigma_{uo}=\Sigma$ and $\Sigma_o\cap\Sigma_{uo}=\emptyset$.
The natural projection $P:\Sigma^{\ast}\rightarrow \Sigma_o^{\ast}$ is defined recursively by
($i$) $P(\epsilon)=\epsilon$,
($ii$) $P(s\sigma)=P(s)\sigma,\mbox{ if } \sigma\in \Sigma_o$,
($iii$) $P(s\sigma)=P(s),\mbox{ if } \sigma\in \Sigma_{uo}$, where $s\in\Sigma^\ast$.
We extend the natural projection $P$ to $\mathcal{L}(G)$ by $P(\mathcal{L}(G))=\{P(s)\in\Sigma_{o}^{\ast}: s\in\mathcal{L}(G)\}$.
We denote the inverse projection by a mapping $P^{-1}:\Sigma_o^{\ast}\rightarrow 2^{\Sigma^{\ast}}$, i.e., for any $\alpha\in\Sigma_{o}^{\ast}$, we have $P^{-1}(\alpha)=\{s\in\Sigma^{\ast}:P(s)=\alpha\}$.

To study opacity of $G=(X,\Sigma,\delta,X_0)$, we assume that $G$ has a set of secret states, denoted by $X_{S}\subset X$.
We use $X_{NS}$ to denote the set of non-secret initial states.
Let $s=s_{1}s_{2}\ldots s_{n}\in\Sigma^{\ast}$ and $x_0\in X_0, x_i\in X$, $i=1,2,\ldots,n$.
If $x_{k+1}\in\delta(x_k,s_{k+1})$, $0\leq k\leq n-1$, we call $x_0\stackrel{s_1}{\rightarrow}x_1\stackrel{s_2}{\rightarrow}x_2\stackrel{s_3}{\rightarrow}\cdots\stackrel{s_n}{\rightarrow}x_n$ a \emph{run} generated by $G$ from $x_0$ to $x_n$ under $s$.
We write $x_0\stackrel{s}{\rightarrow}x_n$ (resp., $x_0\stackrel{s}{\rightarrow}$) when $x_1,x_2,\ldots,x_{n-1}$ (resp., $x_1,x_2,\ldots,x_n$) are not specified.
Note that there may be more than one run under a sequence $s\in\mathcal{L}(G,x_0)$ based on nondeterminism of $G$.
In other words, $x_0\stackrel{s}{\rightarrow}x_n$ may denote some runs.
$x_0\stackrel{s_1}{\rightarrow}x_1\stackrel{s_2}{\rightarrow}x_2\stackrel{s_3}{\rightarrow}\cdots\stackrel{s_n}{\rightarrow}x_n$ is called \emph{a non-secret run} if $x_i\in X\backslash X_S$, $i=0,1,2,\cdots,n$.

\subsection{Standard current-state opacity and initial-state opacity}\label{subsec2.2}

In this subsection, we first recall the formal definitions of standard CSO and ISO in~\cite{Saboori(2007)} and~\cite{Saboori(2013),Wu(2013)}, respectively.
And then, we discuss their limitations for characterizing information security.

\begin{definition}[Standard CSO~\cite{Saboori(2007)}]\label{de:2.1}
Given a system $G=(X,\Sigma,\delta,X_0)$, a projection map $P$ w.r.t. the set $\Sigma_o$ of observable events, and a set $X_{S}\subset X$ of secret states,
$G$ is said to be current-state opaque (CSO)\footnote{For brevity, the terminology ``CSO" (resp., ISO), is used as the acronym of both ``current-state opacity" (resp., ``initial-state opacity") and ``current-state opaque" (resp., ``initial-state opaque"), adapted to the context.} (w.r.t. $\Sigma_o$ and $X_{S}$), if for all $x_0\in X_0$ and for all $s\in\mathcal L(G,x_0)$ such that $\delta(x_0,s)\cap X_S\neq\emptyset$, it holds
\begin{equation}\label{eq:1}
\begin{split}
(\exists x^\prime_{0}\in X_0)(\exists t\in \mathcal L(G, x^\prime_0))[\delta(x^\prime_0,t)\cap (X\backslash X_S)\neq\emptyset \\
\wedge P(s)=P(t)].
\end{split}
\end{equation}
\end{definition}

Note that, in~\cite{Saboori(2007)} the notion of standard CSO was defined for deterministic finite-state automata.
Obviously, this property can be readily extended to the nondeterministic setting as in Definition~\ref{de:2.1}.

\begin{definition}[Standard ISO~\cite{Saboori(2013)}]\label{de:2.2}
Given a system $G=(X,\Sigma,\delta,X_0)$, a projection map $P$ w.r.t. the set $\Sigma_o$ of observable events, and a set $X_{S}\subset X$ of secret states,
$G$ is said to be initial-state opaque (ISO) (w.r.t. $\Sigma_o$ and $X_{S}$), if for all $x_0\in X_0\cap X_S$ and for all $s\in \mathcal{L}(G,x_0)$, it holds
\begin{equation}\label{eq:2}
(\exists x^\prime_0\in X_{NS})(\exists t\in\mathcal{L}(G, x^\prime_0)) [P(s)=P(t)].
\end{equation}
\end{definition}

\begin{remark}\label{re:2.1}
Many results on the verification and enforcement of CSO and ISO have been obtained so far.
However, the aforementioned two categories of opacity may not be able to characterize higher-level privacy requirements in some practical applications.
The following two examples illustrate their limitations.
\end{remark}

\begin{example}\label{ex:2.1}
Let us consider the system $G$ shown in Fig.~\ref{Fig1}, in which the set of secret states is $X_S=\{x_4,x_5\}$.
\begin{figure}[!ht]
  \centering
  \includegraphics[scale=0.75]{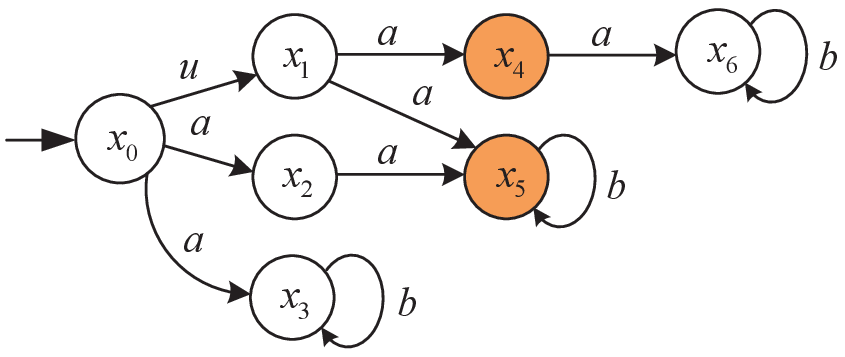}
  \caption{The system $G$ considered in Example~\ref{ex:2.1}, where $\Sigma_o=\{a,b\}$, $\Sigma_{uo}=\{u\}$ and $X_0=\{x_0\}$.}
  \label{Fig1}
\end{figure}
After observing the sequence $ab^k$, $k\geq 0$, $G$ can be in any state of $\{x_2, x_3, x_4, x_5\}$ (resp., $\{x_3, x_5\}$) when $k=0$ (reps., when $k\geq 1$).
For the sequences $s=aab^n$ and $t=uaab^n$, $n\geq 0$, we have $\delta(x_0,s)=\{x_5\}$, $\delta(x_0,t)=\{x_6\}$ and P(s)=P(t).
This means that, for each observation generated by $G$, the intruder can never infer precisely whether $G$ is currently at secret states $x_4$ and/or $x_5$.
By Definition~\ref{de:2.1}, $G$ is standard CSO w.r.t. $\Sigma_o$ and $X_S$.
Since an intruder has the perfect knowledge of the structure of $G$, he/she immediately concludes that $G$ must have visited secret state $x_4$ or $x_5$ after observing $aab^n$, $n\geq 0$.
This leads to some limitations that the security requirement characterized Definition~\ref{de:2.1} is not sufficiently strong to forbid the secret from being revealed.
\end{example}

\begin{example}\label{ex:2.2}
Consider the system $G$ shown in Fig.~\ref{Fig2}, where the set of secret states is $X_S=\{x_2\}$.
\begin{figure}[!ht]
  \centering
  \includegraphics[scale=0.75]{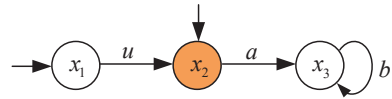}
  \caption{The system $G$ considered in Example~\ref{ex:2.2}, where $\Sigma_o=\{a,b\}$, $\Sigma_{uo}=\{u\}$ and $X_0=\{x_1,x_2\}$.}
  \label{Fig2}
\end{figure}
According to Definition~\ref{de:2.2}, we conclude readily that $G$ is standard ISO.
However, $G$ cannot keep the high-level secret based on the privacy condition characterized by Definition~\ref{de:2.2} since after observing the sequence $ab^n$ ($n\geq 0$) an intruder unambiguously determines that $G$ must have passed through secret initial state $x_2$.
\end{example}

The phenomena in Examples~\ref{ex:2.1} and~\ref{ex:2.2} indicate that the standard versions of CSO and ISO have some limitations to some extent for characterizing high-level information security of a system.
In order to overcome the aforementioned limitations, in this paper we propose two strong versions of opacity, called \emph{strong current-state opacity} (SCSO) and \emph{strong initial-state opacity} (SISO).
Note that these two stronger notions of opacity provide high-level confidentiality.

\section{Strong current-state opacity}\label{sec3}

In this section, we formally formulate the strong version of CSO for a system.
And then, we propose a novel approach to verify it.

\subsection{Notion of strong current-state opacity }\label{subsec3.1}

\begin{definition}[SCSO]\label{de:3.1}
Given a system $G=(X,\Sigma,\delta,$ $X_0)$, a projection map $P$ w.r.t. the set $\Sigma_o$ of observable events, and a set $X_{S}\subset X$ of secret states,
$G$ is said to be strongly current-state opaque (SCSO)\footnote{The terminology ``SCSO" is the acronym of both ``strong current-state opacity" and ``strongly current-state opaque", which depends on the context.} (w.r.t. $\Sigma_o$ and $X_{S}$), if for all $x_0\in X_0$ and for all $s\in\mathcal L(G,x_0)$ such that $\delta(x_0,s)\cap X_S\neq\emptyset$,
there exists a non-secret run $x^\prime_0\stackrel{t}{\rightarrow}x$ such that $P(t)=P(s)$, where $x^\prime_0\in X_0$, $x\in X$.
\end{definition}

Note that if $G$ is a deterministic finite-state automaton, SCSO in Definition~\ref{de:3.1} can be rephrased as follows:
$G$ is said to be SCSO (w.r.t. $\Sigma_o$ and $X_{S}$) if
\begin{equation*}
\begin{split}
& (\forall x_0\in X_0, \forall s\in\mathcal L(G,x_0): \delta(x_0,s)\in X_S)\\
& (\exists x^\prime_0\in X_0, \exists t\in\mathcal L(G,x^\prime_0))[(P(s)=P(t))\wedge\\
& (\forall \bar{t}\in Pr(t))[\delta(x^\prime_0,\bar{t})\notin X_S]].
\end{split}
\end{equation*}

Now we give a physical interpretation of Definition~\ref{de:3.1}.
The notion of SCSO captures that if a sequence $s\in\mathcal L(G)$ is generated and a secret state is reached,
then there exists a sequence $t\in\mathcal L(G)$ that has the same projection as $s$ and its execution generates a run that never passes through a secret state.
SCSO has a higher-level confidentiality than standard CSO.
In other words, SCSO implies standard CSO, but the converse implication does not hold.
For instance, the system $G$ in Example~\ref{ex:2.1} is standard CSO, but not SCSO since after observing the sequence $aa$ an intruder certainly concludes that $G$ has passed secret state $x_4$ or $x_5$.
This results in the following proposition for SCSO and CSO whose straightforward proof is omitted.

\begin{proposition}\label{pro:3.1}
If system $G$ is SCSO w.r.t. $\Sigma_o$ and $X_S$, then it is also standard CSO w.r.t. $\Sigma_o$ and $X_S$.
The converse is not true.
\end{proposition}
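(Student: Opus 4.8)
The plan is to establish the two assertions of the proposition separately, and both are short. For the implication \emph{SCSO $\Rightarrow$ standard CSO}, I would argue directly from the definitions. Fix an arbitrary $x_0\in X_0$ and an arbitrary $s\in\mathcal L(G,x_0)$ such that $\delta(x_0,s)\cap X_S\neq\emptyset$; this is exactly the antecedent that activates the quantified condition in both Definition~\ref{de:2.1} and Definition~\ref{de:3.1}. Applying the hypothesis that $G$ is SCSO to this pair $(x_0,s)$ produces a non-secret run $x^\prime_0\stackrel{t}{\rightarrow}x$ with $x^\prime_0\in X_0$ and $P(t)=P(s)$. The one point to notice is that, by the definition of a non-secret run, every state visited along $x^\prime_0\stackrel{t}{\rightarrow}x$ belongs to $X\backslash X_S$; in particular its terminal state $x$ does, so $x\in\delta(x^\prime_0,t)\cap(X\backslash X_S)$ and hence $\delta(x^\prime_0,t)\cap(X\backslash X_S)\neq\emptyset$. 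Combining this with $t\in\mathcal L(G,x^\prime_0)$ and $P(s)=P(t)$ yields precisely condition~\eqref{eq:1}. Since $x_0$ and $s$ were arbitrary, $G$ is standard CSO.

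For the converse, it suffices to exhibit a single system that is standard CSO but not SCSO, and the system $G$ of Example~\ref{ex:2.1} already serves this purpose. It was shown there that $G$ is standard CSO w.r.t. $\Sigma_o$ and $X_S=\{x_4,x_5\}$. To see that $G$ is not SCSO, I would take the string $s=aa$: there is a run of $G$ whose label projects to $aa$ and which ends in the secret state $x_5$, yet, by inspecting the finitely many runs of $G$ whose label projects to $aa$, every such run passes through $x_4$ or $x_5$. Hence no non-secret run with observation $aa$ exists, the defining condition of SCSO fails for this $s$, and the converse implication is refuted.

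I do not expect any genuine obstacle here: the forward direction is essentially a one-line consequence of the fact that requiring a \emph{non-secret run} is strictly stronger than requiring only that the reached state be non-secret, and the converse is settled by the counterexample already discussed after Definition~\ref{de:3.1}. The only step that warrants a little care is the finite case analysis on Example~\ref{ex:2.1} confirming that no observationally equivalent non-secret run exists; this is purely a check on the runs of that particular $G$ and introduces nothing conceptually new. This is why the authors regard the proof as straightforward.
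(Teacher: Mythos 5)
Your proof is correct and follows exactly the route the paper intends: the forward implication is the one-line observation that the terminal state of a non-secret run witnesses condition~\eqref{eq:1}, and the converse is refuted by the system of Example~\ref{ex:2.1} via the observation $aa$, which is precisely the counterexample the authors cite just before stating the proposition (whose proof they omit as straightforward).
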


\begin{remark}\label{re:3.1}
In~\cite{Falcone(2015)}, the authors proposed the notion of strong $K$-step opacity for a deterministic finite-state automaton $G$.
Specifically, $G$ is said to be strongly $K$-step opaque (w.r.t. $\Sigma_o$ and $X_{S}$) if for all $x_0\in X_0$ and for all $s=s_1s_2\in\mathcal{L}(G,x_0)$ such that $\delta(x_0,s_1)\cap X_S\neq\emptyset$ and $|P(s_2)|\leq K$,
there exists $x^\prime_0\in X_0$ and $t\in\mathcal L(G,x^\prime_0)$ such that $P(t)=P(s)$ and for all $\bar{t}\in Pr(t)$, if $|P(t/\bar{t})|\leq K$, then $\delta(x^\prime_0,\bar{t})\in X\backslash X_S$.
Now we show that SCSO and strong $K$-step opacity are incomparable when $K\geq 1$.
Obviously, strong $K$-step opacity does not imply SCSO based on their definitions.
Also, the converse does not hold.
Let us consider the system $G$ shown in Fig.~\ref{Fig2}, we conclude readily that $G$ is SCSO, but not strongly $K$-step opaque for any $K\geq 1$.
\end{remark}

In the following Subsection~\ref{subsec3.2}, we study the verification of SCSO based on a novel verification approach, which has time complexity $O(|X|^4|\Sigma_o||\Sigma_{uo}||\Sigma|2^{|X|})$.

\subsection{Verifying strong current-state opacity}\label{subsec3.2}

In this subsection, we focus on the verification of SCSO in Definition~\ref{de:3.1}.
In order to obtain the main result, we need to introduce three notions of \emph{non-secret subautomaton}, \emph{observer} and \emph{concurrent composition} for a given system.

Given a system $G=(X,\Sigma,\delta,X_0)$ and a set $X_S\subset X$ of secret states,
we construct a non-secret subautomaton by deleting all its secret states from $G$, which is denoted by $G_{dss}$, where ``$dss$" stands for the acronym of ``\emph{deleting secret states}".
Note that when we delete a secret state $x_s\in X_S$, all transitions attached to $x_s$ are also deleted. Formally,
\begin{equation}\label{eq:3}
G_{dss}=(X_{dss},\Sigma_{dss},\delta_{dss},X_{dss,0}),
\end{equation}
where
\begin{itemize}
  \item $X_{dss}=\{x\in X\backslash X_S: \mbox{ there is a non-secret run } x_0\stackrel{s}{\rightarrow}x\mbox{ for some }x_0\in X_{NS} \mbox{ and } s\in\Sigma^\ast\}$ stands for the set of states;

  \item $\Sigma_{dss}=\{\sigma\in\Sigma: \exists x,x^{\prime}\in X_{dss} \mbox{ s.t. } x^{\prime}\in\delta(x,\sigma)\}$ stands for the set of events;
  \item $\delta_{dss}: X_{dss}\times\Sigma_{dss}\rightarrow 2^{X_{dss}}$ stands for the transition function, i.e., for all $x,x^{\prime}\in X_{dss} \mbox{ and } \sigma\in\Sigma_{dss}, x^{\prime}\in \delta_{dss}(x,\sigma) \mbox{ if } x^{\prime}\in \delta(x,\sigma)$;
  \item $X_{dss,0}=X_{NS}$ stands for the set of initial states.
\end{itemize}

Next, we recall the \emph{observer} of $G_{dss}$ that is defined by
\begin{equation}\label{eq:4}
Obs(G_{dss})=(X^{obs}_{dss},\Sigma^{obs}_{dss},\delta^{obs}_{dss},X^{obs}_{dss,0}),
\end{equation}
where $X^{obs}_{dss}\subseteq 2^{X_{dss}}\backslash \emptyset$ stands for the set of states;
$\Sigma^{obs}_{dss}$ stands for the observable event set $\Sigma_{dss,o}$ of $\Sigma_{dss}$;
$\delta^{obs}_{dss}: X^{obs}_{dss}\times\Sigma^{obs}_{dss}\rightarrow X^{obs}_{dss}$ stands for the (partial) deterministic transition function defined as follows: for any $q\in X^{obs}_{dss}$ and $\sigma\in\Sigma^{obs}_{dss}$,
we have $\delta^{obs}_{dss}(q,\sigma)=\{x^{\prime}\in X_{dss}: \exists x\in q, \exists w\in\Sigma_{dss,uo}^\ast \mbox{ s.t. }x^{\prime}\in\delta_{dss}(x,\sigma w)\}$ if it is nonempty, where $\Sigma_{dss,uo}=\Sigma_{dss}\backslash\Sigma_{dss,o}$;
$X^{obs}_{dss,0}=\{x\in X_{dss}: \exists x_0\in X_{dss,0}, \exists w\in\Sigma_{dss,uo}^\ast \mbox{ s.t. }x\in\delta_{dss}(x_0,w)\}$ stands for the (unique) initial state.
For brevity, we only consider the accessible part of observer $Obs(G_{dss})$.

Note that the widely-used \emph{observer} $Obs(G_{dss})$ is an important information structure that can track all possible states of $G_{dss}$ consistent with the current observation.
Specifically, $Obs(G_{dss})$ captures the observable behavior of $G_{dss}$, that is, $\mathcal{L}(Obs(G_{dss}))=P(\mathcal{L}(G_{dss}))$.
The time complexity of computing observer $Obs(G_{dss})$ for $G_{dss}$ is $O(|X|^2|\Sigma_o||\Sigma_{uo}|2^{|X|})$.
We refer the reader to \cite{Cassandras(2008),Shu(2007)} for more details on the notion of observer.

\begin{example}\label{ex:3.1}
Consider the system $G$ depicted in Fig.~\ref{Fig3}, where the set of secret states is $X_S=\{x_2,x_4\}$.
The non-secret subautomaton $G_{dss}$ and the observer $Obs(G_{dss})$ for $G$ are shown in Fig.~\ref{Fig4}.
\begin{figure}[!ht]
  \centering
  \includegraphics[scale=0.75]{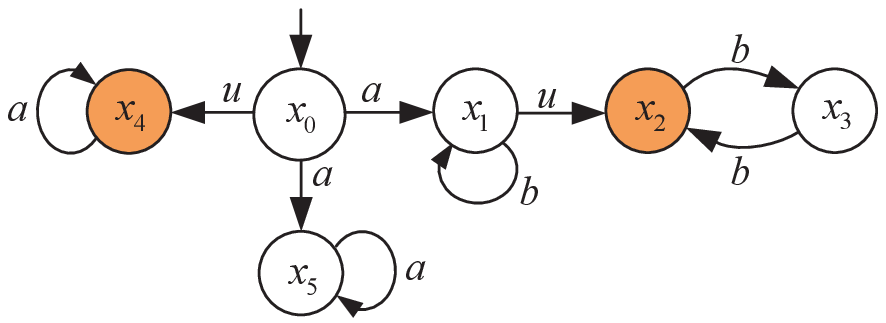}
  \caption{The system $G$ considered in Example~\ref{ex:3.1}, where $\Sigma_o=\{a,b\}$, $\Sigma_{uo}=\{u\}$ and $X_0=\{x_0\}$.}
  \label{Fig3}
\end{figure}

\begin{figure}[!ht]
  \centering
  \includegraphics[scale=0.75]{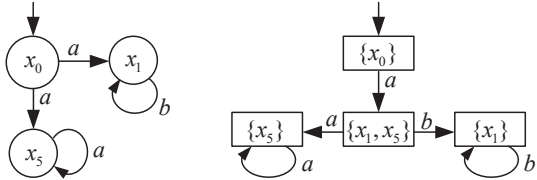}
  \caption{The non-secret subautomaton $G_{dss}$ (left) of $G$ shown in Fig.~\ref{Fig3} and the observer $Obs(G_{dss})$ (right) of $G_{dss}$.}
  \label{Fig4}
\end{figure}
\end{example}

Now we are ready to introduce a novel information structure called the \emph{concurrent composition} of system $G$ and observer $Obs(G_{dss})$, which will be used to verify SCSO in Definition~\ref{de:3.1}.

\begin{definition}[Concurrent Composition]\label{de:3.2}
Given a system $G=(X,\Sigma,\delta,X_0)$ and a set $X_{S}\subset X$ of secret states, the concurrent composition of $G$ and $Obs(G_{dss})$ is an NFA
\begin{equation}\label{eq:5}
Cc(G,Obs(G_{dss}))=(X_{cc},\Sigma_{cc},\delta_{cc},X_{cc,0}),
\end{equation}
where
\begin{itemize}
  \item $X_{cc}\subseteq X\times 2^{X}$ stands for the set of states;
  \item $\Sigma_{cc}=\{(\sigma,\sigma): \sigma\in\Sigma_o\}\cup\{(\sigma,\epsilon): \sigma\in\Sigma_{uo}\}$ stands for the set of events;
  \item $\delta_{cc}: X_{cc}\times\Sigma_{cc}\rightarrow 2^{X_{cc}}$ is the transition function defined as follows: for any state $(x,q)\in X_{cc}$ and for any event $\sigma\in\Sigma$,
  \begin{itemize}
  \item [(i)] when $q\neq\emptyset$,\\
    (a) if $\sigma\in\Sigma_o$, then
   \begin{equation*}
   \begin{split}
  & \delta_{cc}((x,q),(\sigma,\sigma))=\{(x^\prime,q^\prime)\in X_{cc}: x^\prime\in\delta(x,\sigma),\\
  & q^\prime=\delta^{obs}_{dss}(q,\sigma) \mbox{ if }\delta^{obs}_{dss}(q,\sigma) \mbox{ is well-defined },q^\prime=\emptyset \\
  & \mbox{otherwise}\};
   \end{split}
   \end{equation*}
    (b) if $\sigma\in\Sigma_{uo}$, then
   \begin{equation*}
   \delta_{cc}((x,q),(\sigma,\epsilon))=\{(x^\prime,q)\in X_{cc}: x^\prime\in\delta(x,\sigma)\}.
  \end{equation*}
  \item [(ii)] When $q=\emptyset$,\\
     (a) if $\sigma\in\Sigma_o$, then
   \begin{equation*}
   \delta_{cc}((x,\emptyset),(\sigma,\sigma))=\{(x^\prime,\emptyset)\in X_{cc}: x^\prime\in\delta(x,\sigma)\};
   \end{equation*}
     (b) if $\sigma\in\Sigma_{uo}$, then
   \begin{equation*}
   \delta_{cc}((x,\emptyset),(\sigma,\epsilon))=\{(x^\prime,\emptyset)\in X_{cc}: x^\prime\in\delta(x,\sigma)\}.
  \end{equation*}
  \end{itemize}
  \item $X_{cc,0}=X_0\times\{X^{obs}_{dss,0}\}$ stands for the set of initial states.
\end{itemize}
\end{definition}

For a sequence $e\in\mathcal{L}(Cc(G,Obs(G_{dss})))$, we utilize the notations $e(L)$ and $e(R)$ to denote its left and right components, respectively.
A similar notation is employed to the language $\mathcal{L}(Cc(G,Obs(G_{dss})))$.
Further, we use $P(e)$ to denote $P(e(L))$ or $P(e(R))$ since $P(e(L))=P(e(R))$ for any $e\in\mathcal{L}(Cc(G,Obs(G_{dss})))$, which depends on the context.
Intuitively, $Cc(G,Obs(G_{dss}))$ characterizes the following two properties:
(i) $\mathcal{L}(Cc(G,$ $Obs(G_{dss})))(L)=\mathcal{L}(G)$; and (ii) it tracks a sequence $s\in\mathcal L(G)$ from $X_0$ and an observation $\alpha\in\mathcal L(Obs(G_{dss}))$ from $X^{obs}_{dss,0}$ (if exists) such that $P(s)=\alpha$.

\begin{remark}\label{re:3.2}
The idea of concurrent composition was used in~\cite{Zhang(2020)} to verify the so-called eventual strong detectability for a finite-state automaton.
However, there are differences between the construction proposed in~\cite{Zhang(2020)} and the above construction.
First, the construction proposed in~\cite{Zhang(2020)} is based on a system $G$, while the concurrent composition defined in~\eqref{eq:5} is based on two different settings.
Second, the two definitions of transition functions meet different requirements.
Specifically, in~\cite{Zhang(2020)}, the concurrent composition aggregates each pair of transition sequences of $G$ generating the same projection.
However, by Definition~\ref{de:3.2}, $Cc(G,Obs(G_{dss}))$ captures that for any sequence $s\in\mathcal L(G)$, whether or not there exists an observation $\alpha\in\mathcal L(Obs(G_{dss})))$ such that $P(s)=\alpha$.
\end{remark}

\begin{example}\label{ex:3.2}
Consider again the system $G$ depicted in Fig.~\ref{Fig3}, the concurrent composition $Cc(G,Obs(G_{dss}))$ for $G$ is shown in Fig.~\ref{Fig5}, where $Obs(G_{dss})$ is shown in Fig.~\ref{Fig4}.
For brevity, for each state in $Cc(G,Obs(G_{dss}))$, its first and second components are abbreviated by using their respective subscript sets.
For example, state $(4,\{1,5\})$ stands for state $(x_4,\{x_1,x_5\})$.
\begin{figure}[!ht]
  \centering
  \includegraphics[scale=0.75]{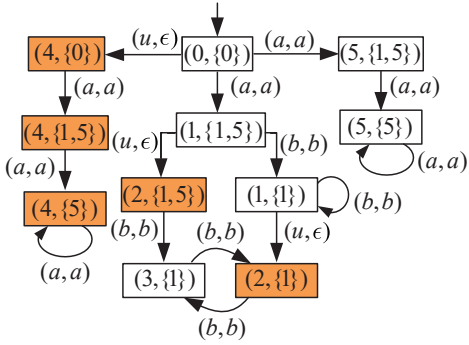}
  \caption{The concurrent composition $Cc(G,Obs(G_{dss}))$ of $G$ shown in Fig.~\ref{Fig3}, where $G_{dss}$ and $Obs(G_{dss})$ are shown in Fig.~\ref{Fig4}.}
  \label{Fig5}
\end{figure}
\end{example}

In order to present the first main result of this paper, we classify the states whose first components belong to $X_S$ in $Cc(G,Obs(G_{dss}))$ defined in~\eqref{eq:5} into two types:
\begin{itemize}
  \item \textbf{leaking secret states}: $(x,q)\in X_{cc}$ with $x\in X_S\wedge q=\emptyset$;
  \item \textbf{non-leaking secret states}: $(x,q)\in X_{cc}$ with $x\in X_S\wedge q\neq\emptyset$;
\end{itemize}

The leaking and non-leaking secret states are interpreted physically as follows.
If $(x,q)\in X_{cc}$ is a leaking secret state, then, by Definition~\ref{de:3.2}, in $G$ there exists $x_0\in X_0$ and $s\in\mathcal{L}(G,x_0)$ such that $x\in\delta(x_0,s)$, and there exists no observation $\alpha\in\mathcal L(Obs(G_{dss}))$ such that $P(s)=\alpha$.
This means that an intruder certainly concludes that $G$ has reached/passed through at least one secret state by observing $P(s)$.
If $(x,q)\in X_{cc}$ is a non-leaking secret state, then there exist $x_0, x^\prime_0\in X_0$, $s\in\mathcal{L}(G,x_0)$, and $t\in\mathcal{L}(G,x^\prime_0)$ such that:
(i) $x\in\delta(x_0,s)$; (ii) $s$ and $t$ have the same projection, i.e., $P(s)=P(t)$;
(iii) there exists a non-secret run $x^{\prime}_0\stackrel{t}{\rightarrow}x^{\prime}$ generated by $G$ from $x^\prime_0$ to $x^\prime$ under $t$.

Based on the above preparation, now we are ready to introduce the main result on verification of SCSO for a system.
It reveals that a system is SCSO if and only if its concurrent-composition structure does not contain any leaking secret state.

\begin{theorem}\label{th:3.1}
Given a system $G=(X,\Sigma,\delta,X_0)$, a projection map $P$ w.r.t. the set $\Sigma_o$ of observable events, and a set $X_{S}\subset X$ of secret states,
let $Cc(G,Obs(G_{dss}))$ be the concurrent composition of $G$ and $Obs(G_{dss})$.
$G$ is SCSO w.r.t. $\Sigma_o$ and $X_S$ if and only if there exists no leaking secret state in $Cc(G,Obs(G_{dss}))$.
\end{theorem}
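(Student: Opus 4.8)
The plan is to prove both directions through a common reformulation: \emph{$G$ is SCSO if and only if every $s\in\mathcal{L}(G)$ with $\delta(x_0,s)\cap X_S\neq\emptyset$ for some $x_0\in X_0$ satisfies $P(s)\in\mathcal{L}(Obs(G_{dss}))$.} First I would establish this reformulation from the construction of $G_{dss}$. A string $t$ admits a non-secret run $x'_0\stackrel{t}{\rightarrow}x$ with $x'_0\in X_0$ exactly when $t\in\mathcal{L}(G_{dss})$: every state visited along such a run lies in $X_{dss}$ (it is reached from a non-secret initial state by a non-secret run) and every transition used is retained in $\delta_{dss}$, so the run is a run of $G_{dss}$ from $X_{dss,0}=X_{NS}$; conversely every run of $G_{dss}$ is a non-secret run of $G$ from $X_0$. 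Hence the clause ``there exists a non-secret run $x'_0\stackrel{t}{\rightarrow}x$ with $P(t)=P(s)$'' in Definition~\ref{de:3.1} is equivalent to $P(s)\in P(\mathcal{L}(G_{dss}))=\mathcal{L}(Obs(G_{dss}))$, which gives the reformulation.

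Next I would prove the structural lemma linking the reformulation to leaking secret states: for every run $x_0\stackrel{s}{\rightarrow}x$ of $G$ with $x_0\in X_0$ there is a reachable state $(x,q)$ of $Cc(G,Obs(G_{dss}))$ with $q=\delta^{obs}_{dss}(X^{obs}_{dss,0},P(s))$ if $P(s)\in\mathcal{L}(Obs(G_{dss}))$ and $q=\emptyset$ otherwise, and conversely every reachable state of $Cc(G,Obs(G_{dss}))$ arises in this way. This is an induction on $|s|$ using the transition function of Definition~\ref{de:3.2}: observable events advance both components, unobservable events advance only the left component, and once the observer component fails to follow an observable event it is set to $\emptyset$ and, by cases (ii)(a)--(b), stays $\emptyset$ thereafter; since $\mathcal{L}(Obs(G_{dss}))$ is prefix-closed, the observer follows all of $P(s)$ iff $P(s)\in\mathcal{L}(Obs(G_{dss}))$, which is precisely the stated dichotomy. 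This lemma is essentially the rigorous form of the two properties of $Cc(G,Obs(G_{dss}))$ noted after Definition~\ref{de:3.2}, and I expect the careful verification of this lemma --- in particular the absorbing behaviour of $q=\emptyset$ and the degenerate cases (the empty string, and $X_{NS}=\emptyset$, where $X^{obs}_{dss,0}=\emptyset$) --- to be the main technical obstacle; the rest is bookkeeping.

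Finally I would combine the two ingredients. For the ``if'' direction, assume $Cc(G,Obs(G_{dss}))$ has no leaking secret state; given $x_0\in X_0$ and $s\in\mathcal{L}(G,x_0)$ with $\delta(x_0,s)\cap X_S\neq\emptyset$, choose $x\in\delta(x_0,s)\cap X_S$ and a run $x_0\stackrel{s}{\rightarrow}x$; the lemma provides a reachable $(x,q)$, and since $x\in X_S$ and no leaking secret state exists we get $q\neq\emptyset$, hence $P(s)\in\mathcal{L}(Obs(G_{dss}))$, and the reformulation yields the required non-secret run, so $G$ is SCSO. For the ``only if'' direction, suppose $Cc(G,Obs(G_{dss}))$ contains a reachable leaking secret state $(x,\emptyset)$ with $x\in X_S$; by the lemma there are $x_0\in X_0$ and $s\in\mathcal{L}(G,x_0)$ with $x\in\delta(x_0,s)$ and $P(s)\notin\mathcal{L}(Obs(G_{dss}))$; then $\delta(x_0,s)\cap X_S\neq\emptyset$ while, by the reformulation, no non-secret run has observation $P(s)$, so $G$ is not SCSO. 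Taking the contrapositive of the second implication completes the equivalence.
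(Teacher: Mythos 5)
Your proposal is correct and follows essentially the same route as the paper's proof: both arguments rest on the two facts that a string admits a non-secret run from $X_0$ exactly when it lies in $\mathcal{L}(G_{dss})$ (so the existence of a matching non-secret run is equivalent to $P(s)\in\mathcal{L}(Obs(G_{dss}))$), and that the reachable states $(x,q)$ of $Cc(G,Obs(G_{dss}))$ are precisely the pairs obtained by running $G$ and the observer in parallel, with $q=\emptyset$ marking failure of the observer to follow $P(s)$. The only difference is one of presentation: you isolate these facts as an explicit reformulation and an inductive structural lemma (including the absorbing behaviour of $q=\emptyset$ and the degenerate cases), whereas the paper asserts the same correspondences directly ``by the construction of $G_{dss}$ and $Cc(G,Obs(G_{dss}))$'' inside a two-way contrapositive argument.
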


\begin{proof}
$(\Rightarrow)$ By contrapositive, assume that there exists a leaking secret state $(x,\emptyset)$ in $Cc(G,Obs(G_{dss}))$.
Let $e\in\mathcal{L}(Cc(G,Obs(G_{dss})))$ be a sequence that leads to $(x,\emptyset)$ from $X_{cc,0}$.
Then, there exists an initial state $(x_0,X^{obs}_{dss,0})\in X_{cc,0}$ such that $(x,\emptyset)\in\delta_{cc}((x_0,X^{obs}_{dss,0}),e)$.
Further, by the construction of $Cc(G,Obs(G_{dss}))$, we have that $x\in\delta(x_0,e(L))$ and $\delta^{obs}_{dss}(X^{obs}_{dss,0},e(R))$ is not well-defined.
The former means $e(L)\in\mathcal{L}(G,x_0)$ and $\delta(x_0,e(L))\cap X_S\neq\emptyset$.
The latter means that for all $x^\prime_0\in X_{NS}$ and for all $t\in\mathcal{L}(G_{dss},x^\prime_0)$ with $P(t)=e(R)=P(e(L))$, it holds $\delta_{dss}(x^\prime_0,t)=\emptyset$.
By the construction of $G_{dss}$, we conclude that there exists no non-secret run in $G$ starting from $X_0$ with observation $P(e)$.
Therefore, by Definition~\ref{de:3.1}, $G$ is not SCSO w.r.t. $\Sigma_o$ and $X_S$.

$(\Leftarrow)$ Also by contrapositive, assume that $G$ is not SCSO w.r.t. $\Sigma_o$ and $X_S$.
By Definition~\ref{de:3.1}, there exists a run $x_0\stackrel{s}{\rightarrow}x_s$, where $x_0\in X_0$ and $x_s\in X_S$, and there exist no $x^\prime_0\in X_{NS}$ and $t\in\mathcal L(G,x^\prime_0)$ such that $P(t)=P(s)$ and $G$ generates a non-secret run starting from $x^\prime_0$ under $t$.
The former means $x_s\in\delta(x_0,s)$.
By the construction of $G_{dss}$, the latter means $\delta_{dss}(x^{\prime\prime}_0,t)=\emptyset$ for all $x^{\prime\prime}_0\in X_{NS}$ and for all $t\in\mathcal L(G,x^{\prime\prime}_0)$ with $P(t)=P(s)$.
This further means that $\delta^{obs}_{dss}(X_{dss,0}^{obs},P(t))$ is not well-defined.
By the construction of $Cc(G,Obs(G_{dss}))$, we conclude that there is $e\in\mathcal{L}(Cc(G,Obs(G_{dss})))$ with $e(L)=s$ and $e(R)=P(t)$ such that $(x_s,\emptyset)\in\delta_{cc}((x_0,X^{obs}_{dss,0}),e)$.
Therefore, $(x_s,\emptyset)$ a leaking secret state in $Cc(G,Obs(G_{dss}))$.
\end{proof}

\begin{example}\label{ex:3.3}
(1) For the system $G$ depicted in Fig.~\ref{Fig3}, there exists no leaking state in $Cc(G,Obs(G_{dss}))$ shown in Fig.~\ref{Fig5}.
By Theorem~\ref{th:3.1}, $G$ is SCSO w.r.t. $X_S$.
(2) For the system $G$ depicted in Fig.~\ref{Fig1}, its concurrent composition $Cc(G,Obs(G_{dss}))$ is shown in Fig.~\ref{Fig6} in which there exists a leaking secret states $(5,\emptyset)$.
By Theorem~\ref{th:3.1}, $G$ is not SCSO w.r.t. $\Sigma_o$ and $X_S$.
\begin{figure}[!ht]
  \centering
  \includegraphics[scale=0.75]{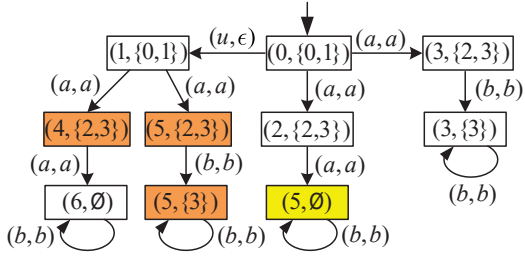}
  \caption{The concurrent composition $Cc(G,Obs(G_{dss}))$ of $G$ shown in Fig.~\ref{Fig1}, where $G_{dss}$ and $Obs(G_{dss})$ are shown in Fig.~\ref{Fig7}.}
  \label{Fig6}
\end{figure}
\begin{figure}[!ht]
  \centering
  \includegraphics[scale=0.75]{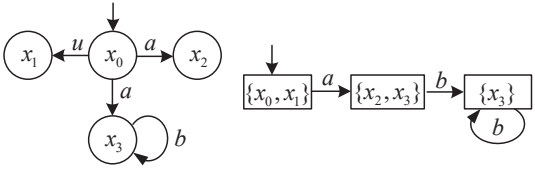}
  \caption{The non-secret subautomaton $G_{dss}$ (left) of $G$ in Fig.~\ref{Fig1} and the observer $Obs(G_{dss})$ (right) of $G_{dss}$.}
  \label{Fig7}
\end{figure}
\end{example}

We end this section by analyzing the time complexity of using Theorem~\ref{th:3.1} to verify SCSO.
Since $G$ has the complexity of $O(|X|^2|\Sigma|)$ and computing $Obs(G_{dss})$ costs time $O(|X|^2|\Sigma_o||\Sigma_{uo}|2^{|X|})$, the time complexity of constructing $Cc(G,Obs(G_{dss}))$ is
$O(|X|^4|\Sigma_o||\Sigma_{uo}||\Sigma|2^{|X|})$.
Consequently, the (worst-case) time complexity of using Theorem~\ref{th:3.1} to verify SCSO is $O(|X|^4|\Sigma_o||\Sigma_{uo}||\Sigma|2^{|X|})$.

\section{Strong initial-state opacity}\label{sec4}

In this section, we focus on the formulation of strong ISO and its verification problem for a system.

\subsection{Notion of strong initial-state opacity }\label{subsec4.1}

\begin{definition}[SISO]\label{de:4.1}
Given a system $G=(X,\Sigma,\delta,$ $X_0)$, a projection map $P$ w.r.t. the set $\Sigma_o$ of observable events, and a set $X_{S}\subset X$ of secret states,
$G$ is said to be strongly initial-state opaque (SISO)\footnote{Similar to the terminology ``SCSO", in this paper ``SISO" is the acronym of both ``strong initial-state opacity" and ``strongly initial-state opaque", which also depends on the context.} (w.r.t. $\Sigma_o$ and $X_{S}$), if for all $x_0\in X_0\cap X_S$ and for all $s\in\mathcal{L}(G,x_0)$,
there exists a non-secret run $x^\prime_0\stackrel{t}{\rightarrow}$ such that $P(t)=P(s)$, where $x^\prime_0\in X_{NS}$.
\end{definition}

Intuitively, SISO in Definition~\ref{de:4.1} captures that for each sequence $s\in\mathcal L(G)$ starting from any $x_0\in X_0\cap X_S$, there exists a sequence $t\in\mathcal L(G)$ starting from some $x^\prime_0\in X_{NS}$ such that $P(t)=P(s)$ and $G$ generates a non-secret run starting from $x^\prime_0$ under $t$.
Note that, SISO, compared with standard ISO in Definition~\ref{de:2.2}, has a higher-level confidentiality obviously.
In other words, SISO implies standard ISO, but the converse is not true.
Taking the system $G$ shown in Fig.~\ref{Fig2} for example, by Definitions~\ref{de:2.2} and~\ref{de:4.1}, we conclude readily that $G$ is standard ISO, but not SISO.

Next, we illustrate these two strong versions of opacity in Definitions~\ref{de:3.1} and~\ref{de:4.1} are incomparable.
In other words, SCSO does not imply SISO, and vice versa.
For instance, by Definitions~\ref{de:3.1} and~\ref{de:4.1}, the system $G$ depicted in Fig.~\ref{Fig2} is SCSO, but not SISO.
On the other hand, let us consider the system $G$ shown in Fig.~\ref{Fig8}.
We know that $G$ is SISO, but not SCSO.
\begin{figure}[!ht]
  \centering
  \includegraphics[scale=0.75]{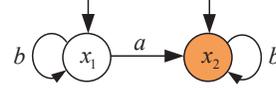}
  \caption{A system $G$ with $\Sigma=\Sigma_o=\{a,b\}$, $X_0=\{x_1,x_2\}$ and $X_{S}=\{x_2\}$.}
  \label{Fig8}
\end{figure}

\subsection{Verifying strong initial-state opacity }\label{subsec4.2}

Inspired by the verification of SCSO in Section~\ref{sec3}, we, in this subsection, use a minor variant of the concurrent composition approach to verify SISO of a system.

Consider a system $G=(X,\Sigma,\delta,X_0)$ in which the set of secret states is denoted by $X_S \subset X$.
$G_{dss}$ stands for the non-secret subautomaton obtained by deleting all secret states in $G$, which is defined in Eq.~\eqref{eq:4}.
$Obs(G_{dss})$ defined in Eq.~\eqref{eq:5} stands for the observer of $G_{dss}$.
To check SISO in Definition~\ref{de:4.1}, we construct an \emph{initial-secret subautomaton} from $G$, denoted by $\hat{G}$, as follows:

\begin{equation}\label{eq:6}
\hat{G}=(\hat{X},\hat{\Sigma},\hat{\delta},\hat{X}_0),
\end{equation}
where
\begin{itemize}
  \item $\hat{X}=\{x\in X: \exists x_0\in {X_0}\cap {X_S},\exists s\in\Sigma^\ast\mbox{ s.t. }x\in\delta(x_0,s)\}$ stands for the set of states;
  \item $\hat{\Sigma}=\{\sigma\in\Sigma:\exists x,x^{\prime}\in\hat{X} \mbox{ s.t. } x^{\prime}\in\delta(x,\sigma)\}$ stands for the set of events;
  \item $\hat{\delta}: \hat{X}\times\hat{\Sigma}\rightarrow 2^{\hat{X}}$ stands for the transition function, i.e., for all $x,x^{\prime}\in\hat{X}$ and $\sigma\in\hat{\Sigma}$, $x^{\prime}\in\hat{\delta}(x,\sigma)$ if $x^{\prime}\in\delta(x,\sigma)$;
  \item $\hat{X}_0={X_0}\cap {X_S}$ stands for the set of secret initial states.
\end{itemize}

Note that when we ``delete" a state that is inaccessible from the set $\hat{X}_0$ of secret initial states, we also delete all transitions attached to that state.
Obviously, $\hat{G}$ is a subautomaton of $G$ and is obtained by deleting all its states unreachable from $\hat{X}_0$.

Next, we construct the \emph{concurrent composition} of $\hat{G}$ and $Obs(G_{dss})$ for $G$, which also is an NFA, as follows.
\begin{equation}\label{eq:7}
Cc(\hat{G},Obs(G_{dss}))=(\hat{X}_{cc},\hat{\Sigma}_{cc},\hat{\delta}_{cc},\hat{X}_{cc,0}),
\end{equation}
where
\begin{itemize}
  \item $\hat{X}_{cc}\subseteq \hat{X}\times 2^{X}$ stands for the set of states;
  \item $\hat{\Sigma}_{cc}=\{(\sigma,\sigma):\sigma\in\hat{\Sigma}_o\}\cup\{(\sigma,\epsilon):\sigma\in\hat{\Sigma}_{uo}\}$ stands for the set of events, where $\hat{\Sigma}_o$ (resp., $\hat{\Sigma}_{uo}$) is the set of observable (resp., unobservable) events of $\hat{\Sigma}$;
  \item $\hat{\delta}_{cc}: \hat{X}_{cc}\times\hat{\Sigma}_{cc}\rightarrow 2^{\hat{X}_{cc}}$ is the transition function defined as follows: for any state $(x,q)\in \hat{X}_{cc}$, and for any event $\sigma\in\hat{\Sigma}$,
    \begin{itemize}
  \item [(i)] when $q\neq\emptyset$,\\
    (a) if $\sigma\in\hat{\Sigma}_o$, then
   \begin{equation*}
   \begin{split}
   & \hat{\delta}_{cc}((x,q),(\sigma,\sigma))=\{(x^\prime,q^\prime)\in \hat{X}_{cc}: x^\prime\in\hat{\delta}(x,\sigma),\\
   & q^\prime=\delta^{obs}_{dss}(q,\sigma) \mbox{ if } \delta^{obs}_{dss}(q,\sigma) \mbox{ is well-defined, } q^\prime=\emptyset \\
   & \mbox{otherwise}\};
   \end{split}
   \end{equation*}
    (b) if $\sigma\in\hat{\Sigma}_{uo}$, then
   \begin{equation*}
   \hat{\delta}_{cc}((x,q),(\sigma,\epsilon))=\{(x^\prime,q)\in \hat{X}_{cc}: x^\prime\in\hat{\delta}(x,\sigma)\}.
  \end{equation*}
  \item [(ii)] when $q=\emptyset$,\\
     (a) if $\sigma\in\hat{\Sigma}_o$, then
   \begin{equation*}
   \hat{\delta}_{cc}((x,\emptyset),(\sigma,\sigma))=\{(x^\prime,\emptyset)\in \hat{X}_{cc}: x^\prime\in\hat{\delta}(x,\sigma)\};
   \end{equation*}
     (b) if $\sigma\in\hat{\Sigma}_{uo}$, then
   \begin{equation*}
   \hat{\delta}_{cc}((x,\emptyset),(\sigma,\epsilon))=\{(x^\prime,\emptyset)\in\hat{X}_{cc}: x^\prime\in\hat{\delta}(x,\sigma)\}.
  \end{equation*}
  \end{itemize}
  \item $\hat{X}_{cc,0}=\hat{X}_0\times\{X^{obs}_{dss,0}\}$ stands for the set of initial states.
\end{itemize}

In order to present the verification criterion of SISO in Definition~\ref{de:4.1}, similarly to the argument in Section~\ref{sec3}, we classify the states in $Cc(\hat{G},Obs(G_{dss}))$ defined in \eqref{eq:7} into two types:

\begin{itemize}
  \item \textbf{leaking states}: for all $(x,q)\in\hat{X}_{cc}$ with $x\in\hat{X}\wedge q=\emptyset$;
  \item \textbf{non-leaking states}: for all $(x,q)\in\hat{X}_{cc}$ with $x\in\hat{X}\wedge q\neq\emptyset$;
\end{itemize}

We now give an interpretation of leaking and non-leaking states as follows:
if $(x,q)\in\hat{X}_{cc}$ is a leaking state, by the construction of $Cc(\hat{G},Obs(G_{dss}))$, then in $G$ there exists a secret initial state $x_0\in X_0\cap X_S$ and a sequence $s\in\mathcal{L}(G,x_0)$ such that $x\in\delta(x_0,s)$, and there exists no $x^{\prime}_0\in X_{NS}$ and $t\in\mathcal{L}(G,x^{\prime}_0)$ such that $x^{\prime}_0\stackrel{t}{\rightarrow}$ is a non-secret run and $P(t)=P(s)$.
If $(x,q)\in\hat{X}_{cc}$ is a non-leaking state, then in $G$ there exists $x_0\in X_0\cap X_S$, $x^\prime_0\in X_{NS}$, $s\in\mathcal{L}(G,x_0)$, and $t\in\mathcal{L}(G,x^\prime_0)$ such that: (i) $P(t)=P(s)$, and (ii) $G$ generates a non-secret run starting from $x^\prime_0$ under $t$.

\begin{theorem}\label{th:4.1}
Given a system $G=(X,\Sigma,\delta,X_0)$, a projection map $P$ w.r.t. the set $\Sigma_o$ of observable events, and a set $X_{S}\subset X$ of secret states,
let $Cc(\hat{G},Obs(G_{dss}))$ be the concurrent composition of $\hat{G}$ and $Obs(G_{dss})$.
$G$ is SISO w.r.t. $\Sigma_o$ and $X_S$ if and only if there exists no leaking state in $Cc(\hat{G},Obs(G_{dss}))$.
\end{theorem}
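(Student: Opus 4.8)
The plan is to mirror, almost verbatim, the contrapositive argument that proves Theorem~\ref{th:3.1}: the structure $Cc(\hat{G},Obs(G_{dss}))$ plays here the role that $Cc(G,Obs(G_{dss}))$ played there, with one conceptual change --- the ``secret'' datum is now carried by the initial states $\hat{X}_0=X_0\cap X_S$ of $\hat{G}$ rather than by the first component of the current state, which is exactly why a \textbf{leaking state} is required only to satisfy $q=\emptyset$ (with no condition $x\in X_S$). Before the two directions I would record two bridging facts that follow immediately from the constructions in~\eqref{eq:6} and~\eqref{eq:3}--\eqref{eq:4}: (a) since $\hat{G}$ is the accessible part of $G$ from $\hat{X}_0$ with transitions inherited from $\delta$ (and $\hat{\Sigma}_o=\hat{\Sigma}\cap\Sigma_o$), a sequence $s$ labels a run $x_0\stackrel{s}{\rightarrow}$ of $\hat{G}$ with $x_0\in\hat{X}_0$ if and only if it labels a run of $G$ starting in $X_0\cap X_S$, and in that case the projection of $s$ in $\hat{G}$ equals $P(s)$; and (b) by $\mathcal{L}(Obs(G_{dss}))=P(\mathcal{L}(G_{dss}))$ together with the construction of $G_{dss}$, a word $\alpha\in\Sigma_o^\ast$ belongs to $\mathcal{L}(Obs(G_{dss}))$ if and only if some $t$ with $P(t)=\alpha$ labels a non-secret run of $G$ from $X_{NS}$.

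For $(\Rightarrow)$ (by contraposition): if $(x,\emptyset)$ is a leaking state reached from $(x_0,X^{obs}_{dss,0})\in\hat{X}_{cc,0}$ by $e\in\mathcal{L}(Cc(\hat{G},Obs(G_{dss})))$, then $x_0\in X_0\cap X_S$; by the definition of $\hat{\delta}_{cc}$ and fact~(a), $e(L)\in\mathcal{L}(G,x_0)$, while the right component having collapsed to $\emptyset$ forces $\delta^{obs}_{dss}(X^{obs}_{dss,0},e(R))$ to be undefined, i.e. $P(e)\notin\mathcal{L}(Obs(G_{dss}))$. By fact~(b), no non-secret run of $G$ from $X_{NS}$ has observation $P(e)$, so $e(L)$ witnesses the failure of Definition~\ref{de:4.1} and $G$ is not SISO.

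For $(\Leftarrow)$ (also by contraposition): if $G$ is not SISO, there are $x_0\in X_0\cap X_S$ and $s\in\mathcal{L}(G,x_0)$ such that no non-secret run of $G$ from $X_{NS}$ has observation $P(s)$. By fact~(a), $s\in\mathcal{L}(\hat{G},x_0)$, and by fact~(b), $P(s)\notin\mathcal{L}(Obs(G_{dss}))$. Since $\mathcal{L}(Obs(G_{dss}))$ is prefix-closed, when we trace $s$ through $Cc(\hat{G},Obs(G_{dss}))$ from $(x_0,X^{obs}_{dss,0})$ --- which is possible at every step, as the left component merely follows $\hat{\delta}$ along $s$ --- the right component becomes $\emptyset$ at the first observable event whose observer update is undefined and, by case~(ii) of $\hat{\delta}_{cc}$, stays $\emptyset$ thereafter. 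Hence some $e$ with $e(L)=s$ reaches a state $(x,\emptyset)$ with $x\in\hat{X}$, which is a leaking state.

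I do not expect a genuine obstacle here, since the proof is essentially a transcription of that of Theorem~\ref{th:3.1}; the only steps deserving care are fact~(a) --- verifying that passing to the subautomaton $\hat{G}$ faithfully encodes ``the run begins at a secret initial state'' without discarding or adding behaviours, which is immediate from the definitions of $\hat{X}_0$ and $\hat{X}$ --- and the bookkeeping in the $(\Leftarrow)$ direction showing that the trace of $s$ in $Cc(\hat{G},Obs(G_{dss}))$ always exists and that its right component never recovers once emptied.
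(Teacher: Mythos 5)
Your proposal is correct and follows essentially the same route as the paper's own proof: both directions are argued by contraposition, translating a leaking state $(x,\emptyset)$ into a run of $G$ from $X_0\cap X_S$ with no observation-equivalent non-secret run from $X_{NS}$, and conversely. Your bridging facts (a) and (b) and the remark that the right component never recovers once it becomes $\emptyset$ merely make explicit steps the paper leaves implicit.
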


\begin{proof}
This proof is analogous to that of Theorem~\ref{th:3.1} by contraposition.

$(\Rightarrow)$ Assume that there exists a leaking state $(x,\emptyset)$ in $Cc(\hat{G},Obs(G_{dss}))$.
Let $e\in\mathcal{L}(Cc(\hat{G},Obs(G_{dss})))$ be a sequence that leads to $(x,\emptyset)$ from $\hat{X}_{cc,0}$.
Then, there exists an initial state $(x_0,X^{obs}_{dss,0})\in\hat{X}_{cc,0}$ such that $(x,\emptyset)\in\hat{\delta}_{cc}((x_0,X^{obs}_{dss,0}),e)$.
By the construction of $Cc(\hat{G},Obs(G_{dss}))$, one has $x_0\in{X_0}\cap{X_S}$, $x\in\hat{\delta}(x_0,e(L))$ and $\delta^{obs}_{dss}(X^{obs}_{dss,0},e(R))$ is not well-defined.
Since $\hat{G}$ is the initial-secret subautomaton of $G$, $x\in\hat{\delta}(x_0,e(L))$ means $x\in\delta(x_0,e(L))$.
However, $\delta^{obs}_{dss}(X^{obs}_{dss,0},e(R))$ not well-defined implies that for all $x^\prime_0\in X_{NS}$ and for all $t\in\mathcal{L}(G_{dss},x^\prime_0)$ with $P(t)=e(R)=P(e(L))$, it holds $\delta_{dss}(x^\prime_0,t)=\emptyset$.
By the construction of $G_{dss}$, we conclude that there do not exist a state $x^\prime_0\in X_{NS}$ and a sequence $t\in\mathcal{L}(G,x^\prime_0)$ with $P(t)=P(e)$ such that $G$ generates a non-secret run starting from $x^\prime_0$ under $t$.
Therefore, by Definition~\ref{de:4.1}, $G$ is not SISO w.r.t. $\Sigma_o$ and $X_S$.

$(\Leftarrow)$ Assume that $G$ is not SISO w.r.t. $\Sigma_o$ and $X_S$.
By Definition~\ref{de:4.1}, we know that: (1) in $G$ there exists a secret initial state $x_0\in X_0\cap X_S$ and a sequence $s\in\mathcal{L}(G,x_0)$,
and (2) there does not exist a sequence $t\in\mathcal L(G,x^\prime_0)$ for any $x^\prime_0\in X_{NS}$ such that $x^\prime_0\stackrel{t}{\rightarrow}$ is a non-secret run and $P(t)=P(s)$.
By the construction of $\hat{G}$, the former means that there exists a state $x\in\hat{X}$ such that $x\in\hat{\delta}(x_0,s)$.
By the construction of $G_{dss}$, the latter means $\delta_{dss}(x^{\prime\prime}_0,t)=\emptyset$ for all $x^{\prime\prime}_0\in X_{NS}$ and for all $t\in\mathcal L(\hat{G},x^{\prime\prime}_0)$ with $P(t)=P(s)$.
This further implies that $\delta^{obs}_{dss}(X_{dss,0}^{obs},P(t))$ is not well-defined.
By the construction of $Cc(\hat{G},Obs(G_{dss}))$, we conclude that there exists a sequence $e\in\mathcal{L}(Cc(\hat{G},Obs(G_{dss})))$ with $e(L)=s$ and $e(R)=P(t)$ such that $(x,\emptyset)\in\hat{\delta}_{cc}((x_0,X_{dss,0}^{obs}),e)$.
Therefore, $(x,\emptyset)$ a leaking state in $Cc(\hat{G},Obs(G_{dss}))$.
\end{proof}

\begin{remark}\label{re:4.1}
We discuss the time complexity of checking SISO for $G$ using Theorem~\ref{th:4.1}.
From Subsection~\ref{subsec3.2}, $Obs(G_{dss})$ has time complexity $O(|X|^2|\Sigma_o||\Sigma_{uo}|2^{|X|})$.
By Eq.~\eqref{eq:6}, the time complexity of constructing $\hat{G}$ from $G$ is $O(|X|^2|\Sigma|)$.
Therefore, the time complexity of constructing $Cc(\hat{G},Obs(G_{dss}))$ is $O(|X|^4|\Sigma_o||\Sigma_{uo}||\Sigma|2^{|X|})$.
This indicates that the time complexity of verifying SISO in Definition~\ref{de:4.1} is $O(|X|^4|\Sigma_o||\Sigma_{uo}||\Sigma|2^{|X|})$.
\end{remark}

\begin{example}\label{ex:4.1}
Let us consider the system $G$ depicted in Fig.~\ref{Fig9}, where the set of secret states is $X_S=\{x_1\}$.
The constructed automata $\hat{G}$, $G_{dss}$ and $Obs(G_{dss})$ for $G$ are shown in Fig.~\ref{Fig10}.
Further, the concurrent composition $Cc(\hat{G},Obs(G_{dss}))$ is shown in Fig.~\ref{Fig11} in which there exist two leaking states $(4,\emptyset)$ and $(5,\emptyset)$\footnote{Note that, similar to $Cc(G,Obs(G_{dss}))$, for each state in $Cc(\hat{G},Obs(G_{dss}))$, its first and second components are also abbreviated by using their respective subscript set. For instance, $(4,\emptyset)$ is the abbreviation of state $(x_4,\emptyset)$.}.
By Theorem~\ref{th:4.1}, $G$ is not SISO w.r.t. $\Sigma_o$ and $X_S$.
\begin{figure}[!ht]
  \centering
  \includegraphics[scale=0.75]{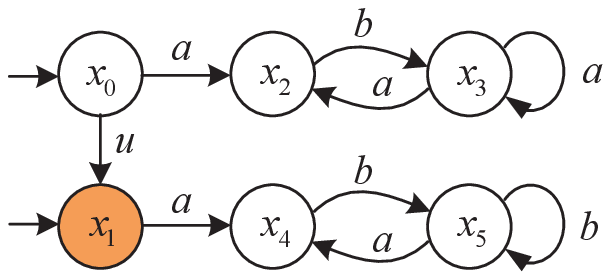}
  \caption{The system $G$ considered in Example~\ref{ex:4.1}, where $\Sigma_o=\{a,b\}$, $\Sigma_{uo}=\{u\}$ and $X_0=\{x_0,x_1\}$.}
  \label{Fig9}
\end{figure}
\begin{figure}[!ht]
  \centering
  \includegraphics[scale=0.75]{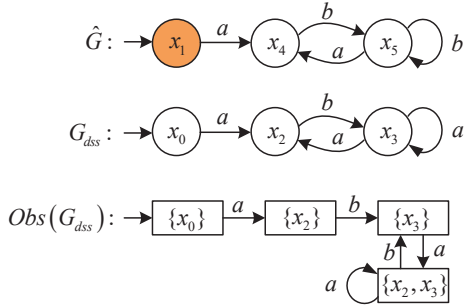}
  \caption{The constructed automata: $\hat{G}$ (above), $G_{dss}$ (middle) and $Obs(G_{dss})$ (below) for the system $G$ in Fig.~\ref{Fig9}.}
  \label{Fig10}
\end{figure}
\begin{figure}[!ht]
  \centering
  \includegraphics[scale=0.75]{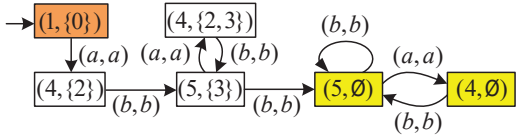}
  \caption{The concurrent composition $Cc(\hat{G},Obs(G_{dss}))$ for the system $G$ in Fig.~\ref{Fig9}.}
  \label{Fig11}
\end{figure}
\end{example}

\section{Concluding remarks}\label{sec5}

In this paper, we characterized two strong versions of opacity called strong current-state opacity and strong initial-state opacity, respectively.
They have higher-level confidentiality than the standard versions.
Further, we constructed two information structures using a novel concurrent-composition technique to verify these two strong versions of opacity, which have (worst-case) time complexity $O(|X|^4|\Sigma_o||\Sigma_{uo}||\Sigma|2^{|X|})$.

Motivated by~\cite{Falcone(2015)} and our work, a strong version of standard Inf-SO, called \emph{strong infinite-step opacity} (Inf-SSO), is formally formulate in the context of nondeterministic settings as follows.
\begin{definition}[Inf-SSO]\label{de:5.1}
Given a system $G=(X,\Sigma,$ $\delta,X_0)$, a projection map $P$ w.r.t. the set $\Sigma_o$ of observable events, and a set $X_{S}\subset X$ of secret states,
$G$ is said to be strongly infinite-step opaque (Inf-SSO) (w.r.t. $\Sigma_o$ and $X_{S}$), if for all $x_0\in X_0$ and for all $s=s_1s_2\in\mathcal{L}(G,x_0)$ such that $\delta(x_0,s_1)\cap X_S\neq\emptyset$,
there exists a non-secret run $x^\prime_0\stackrel{t}{\rightarrow}$ such that $P(t)=P(s)$, where $x^\prime_0\in X_{NS}$.
\end{definition}

By Definition~\ref{de:5.1}, one sees that $G$ is Inf-SSO if and only if for all $x_0\in X_0$ and for all $s\in\mathcal{L}(G,x_0)$, there exists a non-secret run $x^\prime_0\stackrel{t}{\rightarrow}$ for some $x^\prime_0\in X_{NS}$ and some $t\in\mathcal{L}(G,x^\prime_0)$ such that $P(t)=P(s)$.
We can use the concurrent composition $Cc(G,Obs(G_{dss}))$ of $G$ and $Obs(G_{dss})$ constructed in Definition~\ref{de:3.2} to present the result on verification of Inf-SSO, which also has time complexity $O(|X|^4|\Sigma_o||\Sigma_{uo}||\Sigma|2^{|X|})$.
We here omit its proof, which is similar to that of Theorem~\ref{th:3.1}.

\begin{theorem}\label{th:5.1}
Given a system $G=(X,\Sigma,\delta,X_0)$, a projection map $P$ w.r.t. the set $\Sigma_o$ of observable events, and a set $X_{S}\subset X$ of secret states,
let $Cc(G,Obs(G_{dss}))$ be the concurrent composition of $G$ and $Obs(G_{dss})$.
$G$ is Inf-SSO w.r.t. $\Sigma_o$ and $X_S$ if and only if there exists no such a state in $Cc(G,Obs(G_{dss}))$ whose right component is $\emptyset$.
\end{theorem}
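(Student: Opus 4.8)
The plan is to follow the proof of Theorem~\ref{th:3.1} almost verbatim, the only change being that the secrecy constraint on the left component disappears. I would start from the reformulation of Inf-SSO recorded just above the statement: $G$ is Inf-SSO if and only if for every $s\in\mathcal L(G)$ there is a non-secret run $x_0'\stackrel{t}{\rightarrow}$ with $x_0'\in X_{NS}$ and $P(t)=P(s)$. Since $\mathcal L(Obs(G_{dss}))=P(\mathcal L(G_{dss}))$ and, by its construction, $G_{dss}$ retains exactly the states reachable by non-secret runs from $X_{NS}$, this condition is equivalent to $P(\mathcal L(G))\subseteq\mathcal L(Obs(G_{dss}))$. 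I would also make explicit the two structural facts about $Cc(G,Obs(G_{dss}))$ already noted in Section~\ref{sec3}: $\mathcal L(Cc(G,Obs(G_{dss})))(L)=\mathcal L(G)$, and along any $e\in\mathcal L(Cc(G,Obs(G_{dss})))$ the right component equals $\delta^{obs}_{dss}(X^{obs}_{dss,0},e(R))$ as long as the latter is well-defined, and equals $\emptyset$ from the first observable event at which it is not well-defined (and stays $\emptyset$ thereafter, by the case~(ii) clauses of Definition~\ref{de:3.2}).

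For the $(\Rightarrow)$ direction I would argue by contraposition: if $Cc(G,Obs(G_{dss}))$ has a reachable state $(x,\emptyset)$, pick $e$ reaching it from $X_{cc,0}$; then $e(L)\in\mathcal L(G)$ but $\delta^{obs}_{dss}(X^{obs}_{dss,0},e(R))$ is not well-defined, i.e., $P(e(L))=e(R)\notin P(\mathcal L(G_{dss}))$, so there is no non-secret run from $X_{NS}$ with observation $P(e)$, and hence $G$ is not Inf-SSO. In contrast with Theorem~\ref{th:3.1}, here it is irrelevant whether $x\in X_S$, which is precisely why the criterion refers to any state with empty right component rather than to leaking secret states. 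For the $(\Leftarrow)$ direction, again by contraposition, if $G$ is not Inf-SSO then the reformulation yields some $s\in\mathcal L(G)$ with $P(s)\notin\mathcal L(Obs(G_{dss}))$; since $\mathcal L(Obs(G_{dss}))$ is prefix-closed, some shortest observable prefix of $P(s)$ lies outside it, and running $Cc(G,Obs(G_{dss}))$ on the left word $s$ (pairing each observable event with itself and each unobservable event with $\epsilon$) keeps the left component following $s$ while the right component drops to $\emptyset$ exactly when that critical observable event is read. This exhibits $e$ with $e(L)=s$ and $(x,\emptyset)\in\delta_{cc}((x_0,X^{obs}_{dss,0}),e)$ for suitable $x_0\in X_0$ and $x\in\delta(x_0,s)$, as in the proof of Theorem~\ref{th:3.1}.

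The routine content is the short induction showing that the right component of $Cc(G,Obs(G_{dss}))$ tracks $\delta^{obs}_{dss}$ and then freezes at $\emptyset$, which is identical to what is used for Theorem~\ref{th:3.1}. The only point that needs a line of care is the reformulation of Inf-SSO, in particular the sub-case where $s$ visits no secret state at all, so that its own run, which starts in $X_{NS}$, is already the required witness. I do not expect any real obstacle, since the whole argument is a strict simplification of the proof of Theorem~\ref{th:3.1} obtained by dropping the hypothesis that the run under consideration ends in, or passes through, a secret state.
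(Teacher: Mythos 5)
Your proof is correct and follows exactly the route the paper intends: the paper omits the proof of Theorem~\ref{th:5.1}, stating only that it is ``similar to that of Theorem~\ref{th:3.1}'', and your argument is precisely that adaptation, with the one genuinely new ingredient (the reformulation of Inf-SSO for sequences whose runs never visit a secret state, where the run itself serves as the non-secret witness) handled correctly. No gaps.
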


In the future, we plan to exploit the proposed concurrent-composition approach to design algorithms for enforcing strong current-state opacity, strong initial-state opacity and strong infinite-step opacity.


\bibliographystyle{plain}        

\begin{thebibliography}{99}     

\bibitem{Lafortune(2018)}
S. Lafortune, F. Lin, and C. Hadjicostis.
On the history of diagnosability and opacity in discrete event systems.
{\it Annual Reviews in Control}, 45:257--266, 2018.

\bibitem{Hadjicostis(2020)}
C. Hadjicostis.
{\it Estimation and Inference in Discrete Event Systems.}
Springer, Switzerland AG, 2020.

\bibitem{An(2020)}
L. An \and G. Yang.
Opacity enforcement for confidential robust control in linear cyber-physical systems.
{\it IEEE Transactions on Automatic Control}, 265(3):1234--1241, 2020.

\bibitem{Ramasubramanian(2020)}
B. Ramasubramanian, W.R. Cleaveland, \and S. Marcus.
Notions of centralized and decentralized opacity in linear systems.
{\it IEEE Transactions on Automatic Control}, 265(4):1442--1455, 2020.

\bibitem{Mazare(2004)}
	L. Mazar\'{e}.
Using unification for opacity prooperties.
In: Proceeding of the Workshop on Issues in the Theory of Security, pages 165--176, 2004.

\bibitem{Saboori(2007)}
A. Saboori \and C. Hadjicostis.
Notions of security and opaicty in discrete event systems.
In: Proceedings of 46th IEEE Conference on Decision and Control, pages 5056--5061, 2007.

\bibitem{Saboori(2013)}
A. Saboori \and C. Hadjicostis.
Verification of initial-state opacity in security appications of discrete event systems.
{\it Information Sciences}, 246:115--132, 2013.

\bibitem{Saboori(2011a)}
A. Saboori \and C. Hadjicostis.
Verification of $K$-step opacity and analysis of its complexity.
{\it IEEE Transactions on Automation Science and Engineering}, 8(3):549--559, 2011.

\bibitem{Saboori(2012)}
A. Saboori \and C. Hadjicostis.
Verification of infinite-step opacity and complexity considerations.
{\it IEEE Transactions on Automatic Control}, 57(5):1265--1269, 2012.

\bibitem{Lin(2011)}
F. Lin.
Opacity of discrete event systems and its applications.
{\it Automatica}, 47(3):496--503, 2011.

\bibitem{Wu(2013)}
Y. Wu \and S. Lafortune.
Comparative analysis of related notions of opacity in centralized and coordinated architectures.
{\it Discrete Event Dynamic Systems}, 23(3): 307--339, 2013.

\bibitem{Yin(2017)}
X. Yin \and S. Lafortune.
A new approach for the verification of infinite-step and K-step opacity using two-way observers.
{\it Automatica}, 80:162--171, 2017.

\bibitem{Lan(2020)}
H. Lan, Y. Tong, J. Guo, \and A. Giua.
Comments on ``A new approach for the verification of infinite-step and K-step opacity using two-way observers".
{\it Automatica}, 122:109290, 2020.

\bibitem{Dubreil(2010)}
J. Dubreil, P. Darondeau, \and H. Marchand.
Supervisory control for opacity.
{\it IEEE Transactions on Automatic Control}, 55(5):1089--1100, 2010.

\bibitem{Saboori(2012b)}
A. Saboori \and C. Hadjicostis.
Opacity-enforcing supervisory strategies via state estimator constructions.
{\it IEEE Transactions on Automatic Control}, 57(2):1155--1165, 2012.

\bibitem{Tong(2018)}
Y. Tong, Z. Li, C. Seatzu, \and A. Giua.
Current-state opacity enforcement in discrete event systems under incomparable observations.
{\it Discrete Event Dynamic Systems}, 28(2):161--182, 2018.

\bibitem{Ji(2018)}
Y. Ji, Y. Wu, \and S. Lafortune.
Enforcement of opacity by public and private insertion functions.
{\it Automatica}, 93:369--378, 2018.

\bibitem{Ji(2019a)}
Y. Ji, X. Yin, \and S. Lafortune.
Opacity enforcement using nondeterministic publicly-known edit functions.
{\it IEEE Transactions on Automatic Control}, 64(10):4369--4376, 2019.

\bibitem{Ji(2019b)}
Y. Ji, X. Yin, \and S. Lafortune.
Enforcing opacity by insertion functions under multiple energy constraints and imperfect information.
{\it Automatica}, 108:1--14, 2019.

\bibitem{Yin(2020)}
X. Yin \and S. Li.
Synthesis of dynamic masks for infinite-step opacity.
{\it IEEE Transactions on Automatic Control}, 65(4):1429--1441, 2020.

\bibitem{Zhang(2015)}
B. Zhang, S. Shu, \and F. Lin.
Maximum information release while ensuring opacity in discrete event systems.
{\it IEEE Transactions on Automation Science and Engineering}, 12(4):1067--1079, 2015.

\bibitem{Bryans(2008)}
	J. Bryans, M. Koutny, L. Mazar\'{e}, et al.
Opacity generalised to transition systems.
{\it International Journal of Information Security}, 7(6):421--435, 2008.

\bibitem{Tong(2017)}
Y. Tong, Z. Li, C. Seatzu, et al.
Verification of state-based opacity using Petri nets.
{\it IEEE Transactions on Automatic Control}, 62(6):2823--2837, 2017.

\bibitem{Cong(2018)}
X. Cong, M.P. Fanti, A.M. Mangini, \and Z. Li.
On-line verification of current-state opacity by petri nets and integer linear programming.
{\it Automatica}, 94:205--213, 2018.

\bibitem{Saboori(2014)}
A. Saboori \and C. Hadjicostis.
Current-state opacity formulations in probabilistic finite automata.
{\it IEEE Transactions on Automatic Control}, 59(1):120--133, 2014.

\bibitem{Keroglou(2017)}
C. Keroglou \and C. Hadjicostis.
Probabilistic system opacity in discrete event systems.
{\it Discrete Event Dynamic Systems}, 28:289--314, 2018.

\bibitem{Yin(2019a)}
X. Yin, Z. Li, W. Wang, et al.
Infinite-step opacity and K-step opacity of stochastic discrete-event systems.
{\it Automatica}, 99:266--274, 2019.

\bibitem{Tong(2019)}
Y. Tong \and H. Lan.
Current-state opacity verification in modular discrete event systems.
In: Proceedings of the 58th IEEE Conference on Decision and Control, pages 7665--7670, 2019.

\bibitem{Yin(2019b)}
X. Yin \and S. Li.
Opacity of networked supervisory control systems over insecure multiple channel networks.
In: Proceedings of the 58th IEEE Conference on Decision and Control, pages 7641--7646, 2019.

\bibitem{Yang(2019)}
J. Yang, W. Deng, C. Jiang, \and D. Qiu.
Opacity of networked discrete event systems.
In: Proceedings of the 58th IEEE Conference on Decision and Control, pages 6736--6741, 2019.

\bibitem{Zhang(2019)}
K. Zhang, X. Yin, \and M. Zamani.
Opacity of nondeterministic transition systems: A (bi)simulation relation approach.
{\it IEEE Transactions on Automatic Control}, 64(2):5116--5123, 2019.

\bibitem{Yin(2021)}
X. Yin, M. Zamani, \and S. Liu.
On approximate opacity of Cyber-physical systems.
{\it IEEE Transactions on Automatic Control}, DOI 10.1109/TAC.2020.2998733.

\bibitem{Saboori(2011b)}
A. Saboori \and C.N. Hadjicostis..
Coverage analysis of mobile agent trajectory via state-based opacity formulations.
{\it Control Engineering Practice}, 19(9):967--977, 2011.

\bibitem{Wu(2014)}
Y. Wu, K. Sankararaman, \and S. Lafortune.
Ensuring privacy in location-based services: An approach based on opacity enforcement.
In: Proceedings of 12th International Workshop on Discrete Event Systems, pages 33--38, 2014.

\bibitem{Bourouis(2017)}
A. Bourouis, K. Klai, N. Ben Hadj-Alouane, \and Y. El Touati.
On the verification of opacity in web services and their composition.
{\it }IEEE Transactions on Services Computing, 10(1):66--79, 2017.

\bibitem{Lin(2020)}
F. Lin, W. Chen, W. Wang, \and F. Wang.
Information control in networked discrete event systems and its application to battery management systems.
{\it Discrete Event Dynamic Systems}, 30(2):243--268, 2020.

\bibitem{Falcone(2015)}
Y. Falcone \and H. Marchand.
Enforcement and validation (at runtime) of various notions of opacity.
{\it Discrete Event Dynamic Systems}, 25:531--570, 2015.

\bibitem{Cassandras(2008)}
C. G. Cassandras \and S. Lafortune.
{\it Introduction to disctete event system.}
Springer, New York, 2nd Edition, 2008.

\bibitem{Shu(2007)}
S. Shu, F. Lin, \and H. Ying.
Detectability of discrete event systems.
{\it IEEE Transactions on Automatic Control}, 52(12):2356--2359, 2007.

\bibitem{Zhang(2020)}
K. Zhang \and A. Giua.
On detectability of labeled Petri nets and finite automata.
{\it Discrete Event Dynamic Systems}, 30:465--497, 2020.

\end{thebibliography}

\end{document}